\documentclass[10pt,conference]{IEEEtran}
\normalsize
\interdisplaylinepenalty=2500
\usepackage{amssymb,amsmath,amsthm,bm}
\usepackage{enumitem}
\setlength{\columnsep}{0.21in}

\usepackage{flushend}


\usepackage[ruled]{algorithm}
\usepackage[noend]{algorithmic}
\makeatletter
\newcommand\fs@norules{\def\@fs@cfont{\bfseries}\let\@fs@capt\floatc@ruled
	\def\@fs@pre{}%
	\def\@fs@post{}%
	\def\@fs@mid{\kern3pt}%
	\let\@fs@iftopcapt\iftrue}
\makeatother

\usepackage[utf8]{inputenc}
\usepackage[T1]{fontenc}

\usepackage{graphicx}
\usepackage{xcolor} 
\usepackage{tikz,tikzscale}
\usetikzlibrary{fit, spy, calc, shapes, arrows, positioning, plotmarks, arrows.meta, pgfplots.groupplots, backgrounds}
\usepackage{pgfplots}
\pgfplotsset{compat=newest}
\usepgfplotslibrary{patchplots}

\makeatletter
\let\MYcaption\@makecaption
\makeatother
\usepackage[font=footnotesize]{subcaption}
\makeatletter
\let\@makecaption\MYcaption
\makeatother





\renewcommand{\vec}[1]{{\bm{#1}}}
\newtheorem{definition}{Definition}
\newtheorem{example}{Example}
\newtheorem{lemma}{Lemma}
\DeclareMathOperator*{\argmin}{argmin}
\usepackage{mathtools}
\newcommand{\defeq}{\vcentcolon=}

%
%

\begin{document}
\bstctlcite{IEEEexample:BSTcontrol}
\title{Arithmetic Coding Based Multi-Composition Codes for Bit-Level Distribution Matching } 
\author{ 
	Marcin~Pikus\IEEEauthorrefmark{1}\IEEEauthorrefmark{2}
	and~Wen~Xu\IEEEauthorrefmark{1}\\
	\IEEEauthorblockA{\IEEEauthorrefmark{1}Huawei Technologies, Munich Research Center, Riesstr. 25, 80992 Munich, Germany\\
		\IEEEauthorblockA{\IEEEauthorrefmark{2}Institute for Communications Engineering, Technische Universit{\"{a}}t M{\"{u}}nchen, Arcisstr. 21, 80290 Munich, Germany}\vspace{-40pt}
	}

\thanks{M. Pikus is with the Huawei Technologies Duesseldorf GmbH, D-80992 Munich, Germany,
	 and also with the Institute for Communications Engineering, Technische Universität München, D-80333
	Munich, Germany (e-mail: marcin.pikus@gmail.com).}
\thanks{W. Xu is with the Huawei Technologies Duesseldorf GmbH, D-80992 Munich, Germany (e-mail: wen.xu@ieee.org).}
}
\maketitle

\begin{abstract}
A distribution matcher (DM) encodes a binary input data sequence into a sequence of symbols (codeword) with desired target probability distribution. The set of the output codewords constitutes a codebook (or code) of a DM. Constant-composition DM (CCDM) uses arithmetic coding to efficiently encode data into codewords from a constant-composition (CC) codebook. The CC constraint limits the size of the codebook, and hence the coding rate of the CCDM. The performance of CCDM degrades with decreasing output length. To improve the performance for short transmission blocks we present a class of multi-composition (MC) codes and an efficient arithmetic coding scheme for encoding and decoding. The resulting multi-composition DM (MCDM) is able to encode more data into distribution matched codewords than the CCDM and achieves lower KL divergence, especially for short block messages.
\end{abstract}

\begin{IEEEkeywords}
distribution matching, arithmetic coding, multi-composition, probabilistic shaping. 
\end{IEEEkeywords}

\IEEEpeerreviewmaketitle

\section{Introduction}
A distribution matcher (DM) reversibly maps a sequence $\vec{U}$ of independent and uniformly distributed bits into a sequence $\vec{A}$ of symbols to emulate a target distribution $P_A$.  The output of the DM approximates a sequence of independent and identically distributed (IID) symbols, each distributed according to $P_A$.  The accuracy of the approximation is measured by the Kullback–Leibler (KL) divergence between the probability distribution of the DM's output and the probability distribution of the IID sequence.  An inverse distribution matcher (DM$^{-1}$) performs the inverse operation recovering $\vec{U}$  from $\vec{A}$. DMs can be used in communication systems, such as probabilistic amplitude shaping (PAS) \cite{7307154}, to adjust the distribution of transmitted symbols to a distribution beneficial for a certain channel, e.g., a distribution achieving capacity, or reducing the peak-to-average-power ratio. PAS was recently proposed for the 5G mobile system \cite{PSCM_Huawei}.  

We focus here on block-to-block (b2b) DMs, where the input and output sequences have fixed lengths denoted by $k$, and $n$, respectively. The ratio $k/n$ is called  the \textit{matching rate}. Variable length DMs, i.e., non b2b DMs, lead to synchronization problems, error propagation, and variable transmission rate \cite{b2bdm}. A DM can be seen as an encoder which maps $k$ input bits to a non-binary codeword of length $n$. The set of possible codewords, i.e., the codebook (or code), is chosen such that a certain output distribution is emulated. To achieve a good performance, i.e., high matching rate and low normalized KL divergence, relatively long output sequences are needed \cite{7322261}. Long sequences imply a large codebook which may not be stored in memory. DMs which need to store the codebooks, e.g., look-up-table mappings, have thus limited performance. 

A Constant-Composition DM (CCDM) \cite{7322261} is a DM which employs arithmetic coding to generate codewords on-the-fly. In this way, a large codebook can be used without the need of storage. The CCDM is asymptotically optimal for $n \to \infty$, i.e., it achieves the maximal matching rate and vanishing normalized KL divergence. For finite $n$ however, the CCDM suffers from rate loss and high divergence that increases with decreasing $n$. The CCDM uses a constant-composition (CC) codebook, which constraints each codeword to have a fixed number of occurrences of each of the symbols from the alphabet, i.e., each codeword has the same \textit{composition}. The CC constraint contributes to increased KL divergence and decreased matching rate by limiting the number of codewords. It is therefore of interest to improve the DM performance for smaller values of $n$ for at least two reasons: 1) low performance DM, e.g., with high rate loss, can waive the benefits of PAS;  2) throughput and parallelization---it allows to replace one DM with output length $n$ by $\alpha$ parallel DMs with output lengths $n/\alpha$ without significant decrease in the performance. This leads to increased throughput of PAS systems, as DM is the throughput limiting component in the PAS transmission chain. 

In this work, we introduce a Multi-Composition (MC) codes which can be used to build a MC DM (MCDM). A MC codebook is a codebook which contains codewords of multiple compositions. By a proper construction of the MC codebook, the MCDM is able to employ an arithmetic coding algorithm to generate codewords from the MC codebook on-the-fly. The MCDM generalizes the CCDM, and the aforementioned asymptotic optimality of the CCDM holds for the MCDM. By relaxing the CC constraint, we obtain a DM which achieves higher matching rate and lower KL divergence for any $n$. Just recently, two other solutions have also been proposed in the literature, i.e., partition based DM \cite{partitiondm} and shell mapping DM (SMDM) \cite{shelldm}. Roughly speaking, our MCDM is able to index more codewords than \cite{partitiondm}, thus leading to better performance. The proposed MCDM can also be implemented with low-complexity arithmetic coding requiring $\mathcal{O}(n)$ computational complexity \cite{58748}, and can operate with $k$ and $n$ of arbitrary length. For the binary case, we present a MCDM whose performance can approach that of the optimal b2b DM. 

The SMDM is optimal DM, which is suitable for short output lengths due to its complexity \cite{shelldm}. The CCDM is asymptotically optimal for long output lengths \cite{7322261}. The proposed MCDM is based on arithmetic coding just as CCDM. This way, one algorithm (with different parameters) can be used for short and long sequences. This reduces the complexity of the system. 

The structure of this work is organized as follows. In Sec. \ref{s:dm_fund} we introduce distribution matching and the CCDM. In Sec. \ref{s:acoding} we briefly describe how arithmetic coding can be applied for distribution matching. In Sec. \ref{s:mcc} we present the MCDM which can be efficiently implemented with the arithmetic coding scheme. Simulation results are given in Sec. \ref{s:results}. Finally, conclusions are drawn in Sec. \ref{s:conclusions}.

We use the following notations. We denote random variables (RVs) by capital letters, such as $A$, and realizations by small letters, such as $a$. A row vector is denoted by a bold symbol, e.g., $\vec{a}$. The $i$-th entry in the vector $\vec{a}$ is denoted by $a_i$, and a subvector $[a_i,a_{i+1},\cdots, a_j]$ of $\vec{a}$ is denoted by $\vec{a}_i^j$. The length of a vector is denoted by $l(\vec{a})$. E.g., we have $\vec{a} = \vec{a}_1^{l(\vec{a})}$. A RV uniformly distributed on a set $S$ is denoted by $\mathbb{U}_S$, i.e., $S \sim \mathbb{U}_S$ means that $P_{\mathbb{U}_S}(s)=1/|S|$ for $s \in S$.

\section{Distribution Matching} \label{s:dm_fund}
A one-to-one b2b DM is an injective function $f_{\text{DM}}$ from binary input sequences $\vec{u} \in \{0,1\}^k$ to codewords $\vec{c}$ from the codebook $\mathcal{C} \subseteq  \mathcal{A}^n$, i.e., 
\begin{equation}
f_{\text{DM}} \colon  \{0,1\}^k \to \mathcal{C}, 
\end{equation}
where $\mathcal{A}$ is the output alphabet. We assume that the input sequence $\vec{U}$ is a random vector consisting of $k$ IID Bernoulli$(1/2)$ distributed bits. The output sequence of the DM is thus a random vector $\vec{\tilde{A}} = f_{\text{DM}}(\vec{U}) \sim \mathbb{U}_{\mathcal{C}}$ uniformly distributed on $ \mathcal{C}$. The goal of the DM is to make its output "look" as if it was a sequence of IID RVs, each distributed according to the target probability distribution $P_A$. This is usually performed by minimizing the normalized KL divergence between the DM's output $\vec{\tilde{A}}$ and the IID sequence $\vec{A} \sim P_A^n=\prod_{i=1}^n P_A$
\begin{equation}
\frac{1}{n}\mathbb{D}(P_{\vec{\tilde{A}}} \| P_A^n) = \frac{1}{n} \sum_{\vec{c} \in \mathcal{C}}  \frac{1}{|\mathcal{C}|} \log \frac{\frac{1}{|\mathcal{C}|}}{ P_A^n(\vec{c})}. \label{eq:def_div}
\end{equation} 
The divergence can thus be minimized by choosing a proper codebook $\mathcal{C}$. The empirical probability of a single symbol outputted by a DM is defined as
\begin{equation}
P_{\mathcal{C}}(a) = \frac{\sum_{\vec{c} \in \mathcal{C}} n_a(\vec{c})}{n|\mathcal{C}|},
\end{equation}
where we use $n_a(\vec{c}) \defeq \left|\{i \colon  c_i = a \}\right|$ to denote the number of occurrences of $a$ in the sequence $\vec{c}$ and $|\mathcal{C}|$ denotes the number of codewords in the codebook $\mathcal{C}$ (size of the codebook). In the literature, it is often believed that we need $P_{\mathcal{C}} = P_A$ (or $P_{\mathcal{C}} \approx P_A$) to minimize the divergence for finite $n$. However, this may not necessarily be true as we shall see in Sec. \ref{s:results} or as pointed out by authors in \cite[Example 2]{shelldm}. 
The divergence (\ref{eq:def_div}) can be equivalently written as 
\begin{equation}
\frac{1}{n}\mathbb{D}(P_{\vec{\tilde{A}}} \| P_A^n) = \mathbb{H}(P_{\mathcal{C}}) - \frac{\log |\mathcal{C}|}{n} + \mathbb{D}(P_{\mathcal{C}}||P_A), \label{eq:def_div2}
\end{equation} 
where $\mathbb{H}(P_{\mathcal{C}})$ in the entropy of a RV with distribution $P_{\mathcal{C}}$. We observe that the necessary condition for vanishing normalized divergence is that $P_{\mathcal{C}} \to P_A$\footnote{Since $\mathbb{H}(P_{\mathcal{C}}) - \frac{\log |\mathcal{C}|}{n} = \frac{1}{n}\mathbb{D}(P_{\vec{\tilde{A}}} || P_{\mathcal{C}}^n) \ge 0$.}, however $\frac{1}{n}\mathbb{D}(P_{\vec{\tilde{A}}} \| P_A^n) \to 0$ only for $n \to \infty$ \cite{div_scaling}. Thus, we can only expect $P_{\mathcal{C}} \approx P_A$ for large $n$. Equation (\ref{eq:def_div2}) also suggests that for a given $P_{\mathcal{C}}$, larger codebooks, i.e., with greater $|\mathcal{C}|$, are preferred.
 
In \cite{7929328} it was shown that non-binary distribution matching, i.e., with $|\mathcal{A}|>2$, can be well approximated by multiple, binary distribution matchings, i.e., with $|\mathcal{A}|=2$, for typical use cases. For simplicity, we focus on the binary distribution matching in the following sections. The MCDM can be also directly used for non-binary distribution matching, as explained is Sec. \ref{ss:nonbinary}.

CCDM was introduced in \cite{7322261}. It uses a modified coding scheme \cite{58748} based on arithmetic coding to efficiently encode data into the codewords from a CC codebook. In the CC codebook each codeword has the same \textit{composition}.
\begin{definition}
	Assume $\mathcal{A} \!=\! \{a_1, \dotsc, a_m\}$. A \textit{composition} of a vector $\vec{c} \in \mathcal{A}^n$ is a vector containing the numbers of occurrences in $\vec{c}$ of each of the symbols from the alphabet $\mathcal{A}$. We denote a composition by 
	\begin{equation}
	\vec{\gamma}(\vec{c}) \defeq [n_{a_1}(\vec{c}), \dotsc,  n_{a_m}(\vec{c})].
	\end{equation}
\end{definition}
\begin{example} 
	$\mathcal{A} \!=\! \{0, 1\}, \vec{\gamma}(1011) = [1,3]$.
\end{example}
That is, the CC codebook with the composition $\vec{\gamma}$ is
\begin{equation}
	\mathcal{C}_{\text{CC}} = \{ \vec{c} \in \mathcal{A}^n \colon  \vec{\gamma}(\vec{c}) = \vec{\gamma}\ \},
\end{equation}
and the size of the codebook can be expressed by the multinomial coefficient
\begin{equation}
	|\mathcal{C}_{\text{CC}}| = \binom{n}{\gamma_1, \dots, \gamma_m} = \frac{n!}{\prod_{i=1}^{m}\gamma_i!}.
\end{equation}
To guarantee a one-to-one mapping between the binary input sequences and the codewords, the CCDM can use at most $2^k$  input sequences of length $k$, where $k$ reads as
\begin{equation}
	k = \lfloor \log_2 	|\mathcal{C}_{\text{CC}}| \rfloor,
\end{equation}
where $\lfloor \cdot  \rfloor$ is the floor function.

\section{Arithmetic Coding in Distribution Matching} \label{s:acoding}
The coding scheme \cite{58748} based on arithmetic coding has been proposed to efficiently realize encoding and decoding for the so-called $m$-out-of-$n$ codebooks, i.e., the binary constant-weight codebooks, which are a spacial case of the CC codebooks for binary alphabets. In \cite{7322261} it is shown that the arithmetic coding scheme from \cite{58748} can be utilized for the CC codebooks with non-binary alphabets, i.e., for CCDM implementation. In what follows, we demonstrate that the arithmetic coding scheme presented in \cite{58748} can be also employed to efficiently implement our MCDM. 

For simplicity, we consider the binary output alphabet $\mathcal{A}=\{0,1\}$. Assume an arbitrary codebook $\tilde{\mathcal{C}} \subseteq	\mathcal{A}^n$.  Each input data sequence $\vec{u}_i, i\!=\!1,\dots,2^k$ corresponds to a distinct point $d(\vec{u}_i), i\!=\!1,\dots,2^k$ from the interval $[0,1)$. On the other hand, each codeword $\vec{c} \in \tilde{\mathcal{C}}$ corresponds to a distinct subinterval $I(\vec{c})$ of the interval $[0,1)$. The subintervals $I(\vec{c}), \vec{c} \in \tilde{\mathcal{C}}$ are chosen such that they \textit{partition} the interval $[0,1)$, i.e., they are pairwise disjoint and $\bigcup_{\vec{c} \in \tilde{\mathcal{C}}} I(\vec{c}) = [0,1)$. At the encoder, an input data sequence $\vec{u}$ is mapped to a codeword $\vec{c}$ if the corresponding point $d(\vec{u})$ lines inside the corresponding interval $I(\vec{c})$. At the decoder, first an interval $I(\vec{c})$ is determined based on  the received codeword $\vec{c}$. Then, a point $d(\vec{u}) \in I(\vec{c})$ is determined and decoded to the sequence $\vec{u}$. 

Assume a binary input sequence $\vec{u}=[u_1,\dotsc,u_k]$. Let $\text{NBC}(\cdot)$ denote a function which returns the natural binary code (NBC) number corresponding to the sequence $\vec{u}$, i.e.,
\begin{equation}
\text{NBC}(\vec{u}) = \sum_{j=1}^k u_j2^{k-j}.
\end{equation}
The sequence $\vec{u}$ is mapped to a point $d(\vec{u}) \in [0,1)$ via
\begin{equation}
d(\vec{u}) = \frac{\text{NBC}(\vec{u})}{2^k}. \label{eq:du}
\end{equation}
An interval $I(\vec{c})$ for a codeword $\vec{c}=[c_1, \dotsc, c_n]$ can be computed recursively using a chosen probability model $P_{\vec{C}}$ on codeword's bits. The model $P_{\vec{C}}$ is specified in terms of the conditional probabilities (also called branching probabilities) of the next bit given the previous bits, i.e., $P_{C_i|\vec{C}_1^{i-1}}(\cdot|\vec{s})$, where $\vec{s}$ is a sequence denoting a prefix of the codeword. The beginning $x(\vec{c})$ and the width $y(\vec{c})$ of the interval $I(\vec{c}) = [x(\vec{c}), x(\vec{c})+y(\vec{c}) )$ can be computed by applying iteratively equations (\ref{eq:x_update}) and (\ref{eq:y_update}) for $i=1,\dots,n,$
\begin{align}
&x(\emptyset) =0, y(\emptyset) =1, \label{eq:xy_init}\\
&x(\vec{s}c_i) = \begin{cases} 
x(\vec{s}), \text{ if } c_i=0\\
x(\vec{s}) + y(\vec{s})P(0|\vec{s}), \text{ if } c_i=1
 \end{cases} \label{eq:x_update}\\
&y(\vec{s}c_i) = y(\vec{s}) P(c_i|\vec{s}), \label{eq:y_update}
\end{align}
where $\vec{s}c_i$ denotes a concatenation of $\vec{s}$ and $c_i$,  $\emptyset$ denotes an empty sequence, and $P(0|\vec{s})$ stands for $P_{C_i|\vec{C}_1^{i-1}}(0|\vec{s})$. Equation ($\ref{eq:x_update}$) implies a lexicographical ordering of the codewords according to $0<1$ with the most-significant-bit $c_1$. That is, for two codewords $\vec{c}_1$ and $\vec{c}_2$, if $\text{NBC}(\vec{c}_1)<\text{NBC}(\vec{c}_2)$, the $I(\vec{c}_1)$ will be placed in the interval $[0,1)$ below the $I(\vec{c}_2)$. Applying the above equations result in partitioning such that $\forall \vec{c} \in \tilde{\mathcal{C}}$
\begin{align}
&x(\vec{c}) = \sum_{\vec{c}^\prime \in \tilde{\mathcal{C}} \colon \text{NBC}(\vec{c}^\prime)<\text{NBC}(\vec{c})} P_{\vec{C}}(\vec{c}^\prime), \label{eq:x_value}\\
&y(\vec{c}) = \prod_{i=1}^n P(c_i | \vec{c}_1^{i-1}) = P_{\vec{C}}(\vec{c}).  \label{eq:y_value}
\end{align}
That is, the codewords' intervals partition $[0,1)$ and are ordered according to lexicographical ordering $0<1$. E.g., see Fig. \ref{f:base_cb}.
\addtolength{\topmargin}{0.1in}
\addtolength{\textheight}{-0.1in}
A one-to-one mapping between data sequences and codewords can be established if each interval  $I(\vec{c}), \vec{c} \in \tilde{\mathcal{C}}$ contains at most one point $d(\vec{u})$. This can be guaranteed by letting the distance between two adjacent points to be grater than the largest interval, i.e.,
\begin{equation}
\frac{1}{2^k} \ge \max_{\vec{c} \in \tilde{\mathcal{C}}} |I(\vec{c})|. \label{eq:interval_ineq}
\end{equation} 
Since we are interested in maximizing $k$, it is reasonable to choose equal length intervals, i.e., $|I(\vec{c})| = \frac{1}{|\tilde{\mathcal{C}}|} \forall \vec{c} \in \tilde{\mathcal{C}}$. In this case the greatest $k$ fulfilling (\ref{eq:interval_ineq}) equals $\lfloor{\log_2|\tilde{\mathcal{C}}|}\rfloor$.

From (\ref{eq:y_value}) we have that the length of the interval $I(\vec{c})$ is equal to the probability of the codeword $\vec{c}$ (by using the probability model $P_{\vec{C}}$ on codeword's bits).
We are interested in finding the conditional probabilities $P(0|\vec{s})$ and $P(1|\vec{s})$, where $\vec{s}$ is a binary sequence constituting the prefix of the codeword, such that the probability of each codeword $\vec{c} \in \tilde{\mathcal{C}}$ is equal. Let $N(\vec{s})$ denote the number of codewords in $\tilde{\mathcal{C}}$ that have prefix $\vec{s}$, i.e., $N(\vec{s}) \defeq \left| \{ \vec{c} \in \tilde{\mathcal{C}} \colon \vec{c}_1^{l(\vec{s})} = \vec{s}\}\right|$. We define the following conditional probabilities
\begin{align}
P(0|\vec{s}) \defeq \frac{N(\vec{s}0)}{N(\vec{s})}, \;\; P(1|\vec{s}) \defeq \frac{N(\vec{s}1)}{N(\vec{s})}. \label{eq:pb_s}
\end{align}
For any $\vec{c}=[c_1, \dots, c_n] \in \tilde{\mathcal{C}}$ we have
\begin{align}
P(\vec{c}) &= P(c_1|\emptyset)P(c_2|c_1) \dotsc P(c_n|\vec{c}_1^{n-1})\\
&= \frac{N(c_1)}{N(\emptyset)} \frac{N(\vec{c}_1^2)}{N(c_1)} \dotsc \frac{N(\vec{c}_1^n)}{N(\vec{c}_1^{n-1})} = \frac{N(\vec{c}_1^n)}{N(\emptyset)} = \frac{1}{|\tilde{\mathcal{C}}|},
\end{align}
which shows that by employing model as in (\ref{eq:pb_s}) we can obtain intervals of equal length.

\begin{algorithm}[t]%
	\small
	\caption{\strut Encoding}%
	\begin{algorithmic}[1]%
		\renewcommand{\algorithmicrequire}{\textbf{Input:}}
		\renewcommand{\algorithmicensure}{\textbf{Output:}}
		\REQUIRE binary sequence $\vec{u}=[u_1, \cdots, u_k]$
		\ENSURE  codeword $\vec{c}=[c_1, \cdots, c_n]$ s.t. $d(\vec{u}) \in I(\vec{c})$
		\STATE $\vec{s} = \emptyset$,  $x(\emptyset) = 0$, $y(\emptyset) = 1$ 		
		\FOR {$i = 1$ to $n$}
		\IF {$d(\vec{u}) \in [x(\vec{s}),x(\vec{s}) + y(\vec{s})P(0|\vec{s}))$}
		\STATE $x(\vec{s}0) \gets x(\vec{s})$
		\STATE $y(\vec{s}0) \gets y(\vec{s})P(0|\vec{s})$
		\STATE $\vec{s} \gets \vec{s}0$
		\ELSE
		\STATE $x(\vec{s}1) \gets x(\vec{s}) + y(\vec{s})P(0|\vec{s})$
		\STATE $y(\vec{s}1) \gets y(\vec{s})P(1|\vec{s})$
		\STATE $\vec{s} \gets \vec{s}1$
		\ENDIF
		\ENDFOR
		\RETURN $\vec{c} \gets \vec{s}$
	\end{algorithmic}%
\end{algorithm}%

\begin{algorithm}[t]%
	\small
	\caption{\strut Decoding}%
	\begin{algorithmic}[1]%
		\renewcommand{\algorithmicrequire}{\textbf{Input:}}
		\renewcommand{\algorithmicensure}{\textbf{Output:}}
		\REQUIRE codeword $\vec{c}=[c_1, \cdots, c_n]$ 
		\ENSURE binary sequence $\vec{u}=[u_1, \cdots, u_k]$ s.t. $d(\vec{u}) \in I(\vec{c})$		  
		\STATE $\vec{s} = \emptyset$, $x(\emptyset) = 0$, $y(\emptyset) = 1$ 		
		\FOR {$i = 1$ to $n$}
		\IF {$c_i = 0$}
		\STATE $x(\vec{s}0) \gets x(\vec{s})$
		\STATE $y(\vec{s}0) \gets y(\vec{s})P(0|\vec{s})$
		\STATE $\vec{s} \gets \vec{s}0$
		\ELSE
		\STATE $x(\vec{s}1) \gets x(\vec{s}) + y(\vec{s})P(0|\vec{s})$
		\STATE $y(\vec{s}1) \gets y(\vec{s})P(1|\vec{s})$
		\STATE $\vec{s} \gets \vec{s}1$
		\ENDIF
		\ENDFOR
		\STATE $\vec{u} = \text{NBC}^{-1}(\lceil x(\vec{s})2^k \rceil)$
		\RETURN $\vec{u}$
	\end{algorithmic}%
\end{algorithm}%


To encode data into the codewords from $\tilde{\mathcal{C}}$ we apply arithmetic decompression of the input sequence $\vec{u}$  using the model (\ref{eq:pb_s}), see Algorithm 1 for details.  
To decode the codeword, we apply arithmetic compression on the codeword $\vec{c}$ using the same model (\ref{eq:pb_s}), see Algorithm 2 for details. 
Retrieving $\vec{u}$ from $I(\vec{c})$ is the final step of Algorithm 2, and is performed in the line $11$ which follows from (\ref{eq:du}) and the fact that $d(\vec{u}) \in I(\vec{c})$. In practice, to avoid numerical underflow, the intervals have to be rescaled during each step. The coding scheme can be also implemented using only integer calculations. For further implementation considerations, see e.g. \cite{58748}.

Following the above steps we can efficiently encode/decode data into/from the codewords $\vec{c} \in \tilde{\mathcal{C}}$. In general, not all codewords from $\tilde{\mathcal{C}}$ will be used since, by (\ref{eq:interval_ineq}), we can use at most $2^{\log_2 \lfloor |\tilde{\mathcal{C}}| \rfloor}$ codewords. The selection of the used codewords is done implicitly by the encoding/decoding algorithm. We introduce the notion of the \textit{base codebook}, which contains all codewords, i.e., the selected and non-selected ones.
\begin{definition} \label{def:base_cb}
A base codebook, denoted by $\tilde{\mathcal{C}}$, for the coding scheme from Sec. \ref{s:acoding} is a codebook which is used to compute the branching probabilities (\ref{eq:pb_s}), i.e.,
\begin{equation}
	P(a|\vec{s}) = \frac{N(\vec{s}a)}{N(\vec{s})} = \frac{| \{ \vec{c} \in \tilde{\mathcal{C}} \colon \vec{c}_1^{l(\vec{s})} = \vec{s} \wedge c_{l(\vec{s})+1}=a\}|}{| \{ \vec{c} \in \tilde{\mathcal{C}} \colon \vec{c}_1^{l(\vec{s})} = \vec{s}\}|},
\end{equation}
for any $a \in \mathcal{A}$ and any prefix $\vec{s}$.   
\end{definition}

The \textit{actual codebook} $\mathcal{C}$, is the codebook actually used by the encoder/decoder, i.e.,
\begin{equation}
\mathcal{C} = \{ \vec{c} \in \tilde{\mathcal{C}} \colon \exists {\vec{u} \in \{0,1\}^k} \text{  s.t. } \vec{c} = f_{\text{DM}}(\vec{u}) \},
\end{equation} 
where $f_{\text{DM}}$ is the encoder function. The actual codebook is a subset of the base codebook implicitly chosen by the encoding/decoding algorithm.

\addtolength{\topmargin}{-0.1in}
\addtolength{\textheight}{0.1in} 
\section{Multi-Composition Codebooks} \label{s:mcc}
Assume an arbitrary base codebook $\tilde{\mathcal{C}} \!=\! \{ \vec{c}_i \!\in\! \{0,1\}^n, i=1,\dotsc,N \}$. Using the coding scheme from Sec. \ref{s:acoding}, we are able to encode/decode data into/from codewords from $\tilde{\mathcal{C}}$. This involves computing the probabilities (\ref{eq:pb_s}). In a general case, finding (\ref{eq:pb_s}) entails evaluating $N(\vec{s})$ by counting the codewords from $\tilde{\mathcal{C}}$ which have the prefix $\vec{s}$. This is not feasible for large codebooks which we target. Therefore, we introduce a structure into the base codebook $\tilde{\mathcal{C}}$. 

We observe that $N(\vec{s})$ is easy to compute for base codebooks containing \textit{all codewords} of a single composition. Assume a binary composition $\vec{\gamma}_0 = [n-m, m]$ and the codebook $\tilde{\mathcal{C}} \!=\! \{ \vec{c} \!\in\! \{0,1\}^n \colon  \gamma(\vec{c}) = \vec{\gamma}_0\}$, i.e., the so-called $m$-out-of-$n$ codebook containing all codewords of Hamming weight $m$. For any prefix $\vec{s}$, we have
\begin{equation}
N(\vec{s}) = \binom{n-l(\vec{s})}{m-n_1(\vec{s})}.
\end{equation} 

Consequently, $N(\vec{s})$ is simple to compute for base codebooks containing \textit{all codewords} of multiple compositions. Assume a set of different binary compositions
\begin{equation} 
\Gamma = \{ \vec{\gamma}_1, \dots, \vec{\gamma}_I \}, \label{eq:composition_set}
\end{equation}
where $\vec{\gamma}_i = [n-m_i, m_i]$ for $i=1,\dotsc,I$, and the base codebook $\tilde{\mathcal{C}} \!=\! \{ \vec{c} \!\in\! \{0,1\}^n \colon  \vec{\gamma}(\vec{c}) \in \Gamma\}$. Here, we will refer to such a codebook as the MC base codebook. For any prefix $\vec{s}$, we have
\begin{equation}
N(\vec{s}) = \sum_{i=1}^{I} \binom{n-l(\vec{s})}{m_i-n_1(\vec{s})},
\end{equation}
which is easy to evaluate. Note that $N(\vec{s})$ depends only on two parameters $l(\vec{s})$, and $n_1(\vec{s})$, therefore it is also possible to store the precomputed values in a look-up-table (LUT). For implementation we need only to store the values of $P(0|\vec{s})$ (or $P(1|\vec{s})$). As $P(0|\vec{s})$ (or $P(1|\vec{s})$) depends only on $l(\vec{s})$ and $n_1(\vec{s})$, we need to store at most $n^2$ values.  For an arbitrary base codebook, $N(\vec{s})$ depends on the whole sequence $\vec{s}$ and it is not feasible to store all values in a LUT. Here, we will refer to a DM using the base MC codebook and the coding scheme from Sec. \ref{s:acoding} as the Multi-Composition Distribution Matcher (MCDM).


\begin{figure}
	\centering
	\begin{subfigure}[t]{0.48\columnwidth}%
	\centering
	\includegraphics[scale=0.7]{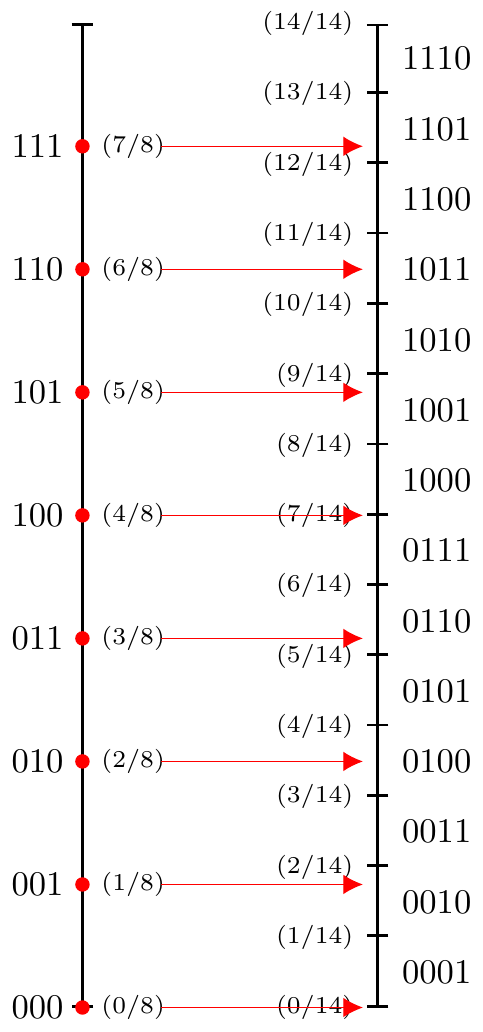}
		\caption{$[1,3]$-out-of-$4$ base codebook.}%
	\end{subfigure}%
	~ 
	\begin{subfigure}[t]{0.48\columnwidth}
	\centering	
	\includegraphics[scale=0.7]{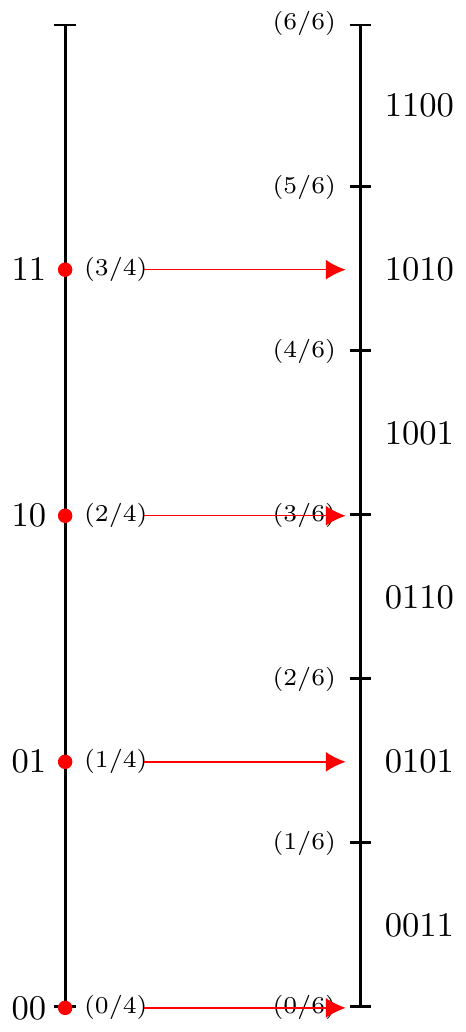}
		\caption{$2$-out-of-$4$ base codebook.}
		\label{fig:ccdm_intervals}	
	\end{subfigure}
	\caption{Encoding/decoding into the codewords from the base codebooks using the algorithm from Sec. \ref{s:acoding}.}
	\label{f:base_cb}
\end{figure}



\subsection{Some Special Cases} \label{ss:specialcases}
By selecting the set of compositions (\ref{eq:composition_set}) we can obtain a specific base codebook. 
\begin{definition} \label{def:opt_dm}
	A $[m_L,m_U]$-out-of-$n$ codebook is a codebook with codewords of Hamming weight at least $m_L$ and at most $m_U$, i.e., 
	\begin{equation}
	\tilde{\mathcal{C}} = \{ \vec{c} \in \{0,1\}^n \colon m_L \le n_1(\vec{c}) \le m_U \}.
	\end{equation}   
\end{definition}

Based on Definition \ref{def:opt_dm}, we have the following special cases.
\subsubsection{$m$-out-of-$n$ codebook} 
By selecting $\Gamma = \left\{[n-m,m]\right\}$ we obtain a CC codebook as in the CCDM. For CC codebook the probability in (\ref{eq:pb_s}) has a particularly simple form, i.e., 
\begin{equation}
P(1|\vec{s}) = \frac{m-n_1(\vec{s})}{n-l(\vec{s})}.
\end{equation}

\subsubsection{$[m\!-\!1,m]$-out-of-$n$ codebook}
By selecting $\Gamma = \left\{[n-m+1,m-1],[n-m,m]\right\}$ we obtain a codebook which contains two adjacent compositions. We refer to such a MCDM as 2C-MCDM. The probability  in (\ref{eq:pb_s}) also admits a simple form 
\begin{equation}
P(1|\vec{s}) = \frac{m-n_1(\vec{s})}{n-l(\vec{s})+1}.
\end{equation} 
As such, the CCDM can be changed into the 2C-MCDM by just changing the denominator in the applied model for arithmetic coding.

\subsubsection{$[0,m]$-out-of-$n$ codebook}
By selecting $\Gamma = \left\{ [n,0],\right.$ $ \left.[n-1,1], \dots, [n-m,m]\right\}$ we obtain a codebook which contains all sequences up to Hamming weight $m$. This is the optimal codebook for an ideal DM which can encode into a codebook of arbitrary size.

\begin{lemma} Assume the output alphabet $\mathcal{A}=\{0,1\}$, and the target probability such that $P_A(0)>P_A(1)$. Assume a DM with output $\mathbb{U}_{\mathcal{C}}$. The actual codebook $\mathcal{C}$ which minimizes the normalized KL divergence (\ref{eq:def_div}) 
\begin{enumerate}[label=(\alph*)]
\item consists of $2^k$ most likely codewords according to $P_A$, if  we require $|\mathcal{C}|=2^k$ for some $k$ \cite{b2bdm}. This DM can be implemented by the SMDM \cite{shelldm}. 
\item is a $[0,m]$-out-of-$n$ codebook for some $m$, if $|\mathcal{C}|$ is not constrained \cite{div_scaling}.
\end{enumerate}
\end{lemma}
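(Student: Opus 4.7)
I would prove part (a) first and use it as the main lever for part (b).

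For part (a), specializing (\ref{eq:def_div}) under the constraint $|\mathcal{C}|=2^k$ rewrites the normalized divergence as
\begin{equation*}
\frac{1}{n}\mathbb{D}(P_{\vec{\tilde{A}}}\|P_A^n) \;=\; -\frac{\log 2^k}{n} \;-\; \frac{1}{n\,2^k}\sum_{\vec{c}\in\mathcal{C}} \log P_A^n(\vec{c}).
\end{equation*}
The first term is a fixed constant, so minimizing the divergence is equivalent to maximizing the additive objective $\sum_{\vec{c}\in\mathcal{C}}\log P_A^n(\vec{c})$ over subsets $\mathcal{C}\subseteq\{0,1\}^n$ of cardinality $2^k$, which is solved by simply picking the $2^k$ codewords with the largest $P_A^n(\vec{c})$ values (breaking ties arbitrarily).

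For part (b), I would use (a) as a reduction: at any fixed size $N=|\mathcal{C}|$, the best codebook of that size lists the $N$ most likely codewords. Since $P_A^n(\vec{c}) = P_A(0)^{n-n_1(\vec{c})}P_A(1)^{n_1(\vec{c})}$ is a strictly decreasing function of $n_1(\vec{c})$ whenever $P_A(0)>P_A(1)$, "most likely" is equivalent to "smallest Hamming weight." Hence every candidate optimal codebook has the form "all codewords of weight $\le m$, plus $j$ arbitrary codewords of weight $m+1$," with $m\ge 0$ and $0\le j\le\binom{n}{m+1}$. It thus suffices to show that the divergence-minimizing $j$ is always $0$ or $\binom{n}{m+1}$, which is exactly the $[0,m]$-out-of-$n$ claim.

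To rule out interior optima, fix $m$, set $N_0 = \sum_{i=0}^m \binom{n}{i}$, $L_0 = \sum_{\vec{c}:n_1(\vec{c})\le m}\log P_A^n(\vec{c})$, and $\log p = \log P_A^n(\vec{c})$ for any weight-$(m+1)$ codeword. Substituting $N = N_0 + j$ into the divergence expression from (a), the objective becomes, up to additive constants independent of $j$,
\begin{equation*}
F(N) \;=\; -\log N \;+\; \frac{c}{N},\qquad c \defeq N_0\log p - L_0.
\end{equation*}
Because $\log p < \log P_A^n(\vec{c})$ for every $\vec{c}$ of weight $\le m$, we have $c<0$. A short calculation gives $F'(N)=-(N+c)/N^2$, whose unique positive zero is $N=-c$, and $F''(-c)=c/(-c)^3<0$. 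So $F$ is unimodal on $(0,\infty)$ with an \emph{interior maximum}, which forces its minimum over any closed interval --- in particular over $[N_0,N_0+\binom{n}{m+1}]$ --- to occur at an endpoint. The optimal $j$ is therefore $0$ or $\binom{n}{m+1}$, and the globally optimal codebook is $[0,m^\star]$-out-of-$n$ for the $m^\star$ that minimizes $F$ over these endpoint sizes. The delicate step --- and the only place the hypothesis $P_A(0)>P_A(1)$ is used --- is the sign $c<0$ guaranteeing that the interior stationary point of the continuous relaxation is a maximum rather than a minimum; everything else is rearrangement of (\ref{eq:def_div}) and a top-$N$ selection argument.
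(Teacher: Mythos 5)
The paper does not actually prove this lemma---it simply points to \cite[Sec.~IV]{b2bdm} for part (a) and \cite[Lemma 5]{div_scaling} for part (b)---so there is no in-paper argument to compare against line by line. Your self-contained proof is correct. Part (a) is the standard observation that, with $|\mathcal{C}|$ fixed, the divergence in (\ref{eq:def_div}) splits as $-\tfrac{1}{n}\log|\mathcal{C}|-\tfrac{1}{n|\mathcal{C}|}\sum_{\vec{c}}\log P_A^n(\vec{c})$, reducing the problem to a top-$N$ selection; this matches the argument in the cited reference. Part (b) is the substantive contribution: the reduction to codebooks of the form ``all weights $\le m$ plus $j$ codewords of weight $m+1$'' is valid because $P_A(0)>P_A(1)$ makes likelihood strictly decreasing in Hamming weight, and your continuous relaxation $F(N)=-\log N+c/N$ with $c=N_0\log p-L_0<0$ correctly shows $F$ increases on $(0,-c)$ and decreases on $(-c,\infty)$, so the minimum over each segment $[N_0(m),N_0(m+1)]$ sits at an endpoint, i.e.\ at a $[0,m]$-out-of-$n$ codebook; taking the best endpoint over all $m$ gives the claim. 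Two cosmetic remarks: the second-derivative test is redundant (the sign of $F'(N)=-(N+c)/N^2$ already gives unimodality, and your displayed value of $F''(-c)$ is only accidentally equal to the correct $-1/c^2$), and you should state explicitly that the additive constant $-\log p$ you drop depends on $m$ but not on $j$, which is all the endpoint argument needs since you optimize over $j$ at fixed $m$ before comparing across $m$. Whether this is the same route as \cite[Lemma 5]{div_scaling} cannot be judged from the paper, but as a standalone argument it is sound and arguably more elementary than deferring to the reference.
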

\begin{proof}
See \cite[Sec. IV]{b2bdm}, \cite[Lemma 5]{div_scaling}.
\end{proof}

The MCDM with the $[0,m]$-out-of-$n$ codebook, which we will refer to as Opt-MCDM, can be seen as an approximation of the optimal DM from Lemma 1b. Note that when the $[0,m]$-out-of-$n$ codebook size is equal to $2^k$, the Opt-MCDM is the optimal DM (as in Lemma 1a). In practice, the Opt-MCDM offers close to optimal performance. The probability in (\ref{eq:pb_s}) is
$$P(1|\vec{s}) = \frac{\sum_{i=0}^{m} \binom{n-1-l(\vec{s})}{i-1-n_1(\vec{s})}}{\sum_{i=0}^{m} \binom{n-l(\vec{s})}{i-n_1(\vec{s})}},$$
which depends only on $l(\vec{s})$ and $n_1(\vec{s})$, and it can therefore be precomputed and stored in a LUT for efficient implementation.

\begin{figure}
	\centering
	\begin{subfigure}{\columnwidth}%
		\centering
	\includegraphics[]{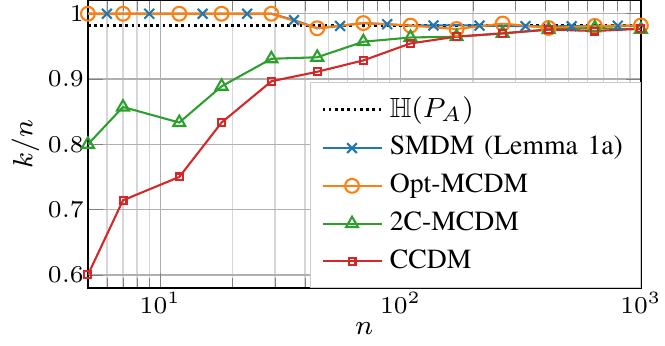}
		\vspace{-0.2cm}
		\caption{Matching rate $k/n$ vs $n$}%
		\label{f:results1a}%
	\end{subfigure}
	\vspace{0.35cm}
	
	\begin{subfigure}{\columnwidth}%
	\centering
	\includegraphics[]{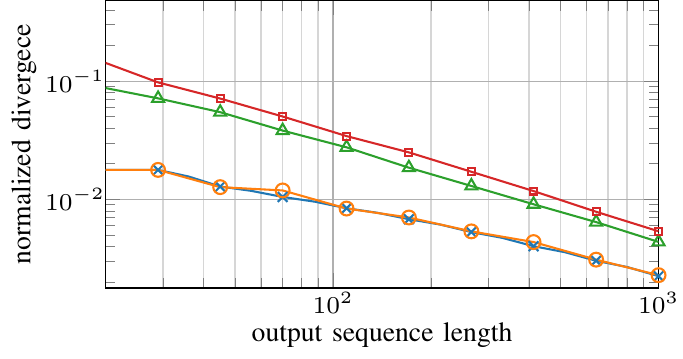}
		\vspace{-0.2cm}%
		\caption{Normalized KL divergence vs $n$}%
		\label{f:results1b}%
	\end{subfigure}
	\vspace{0.35cm}	
	
	\begin{subfigure}{\columnwidth}%
		\centering
		\includegraphics[]{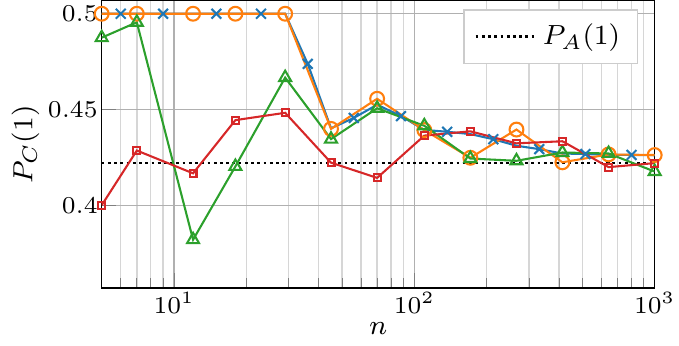}
		\vspace{-0.2cm}
		\caption{$P_{\mathcal{C}}(1)$ and the target $P_A(1)$ vs $n$}%
		\label{f:results1c}%
	\end{subfigure}%
	\vspace{0.1cm}	
	\caption{Parameters of the DMs with the optimal base codebooks for the target $P_A(1)=0.422$.}
	\label{f:results1}
	\vspace{-5mm}
\end{figure}

\subsection{Non-binary Case} \label{ss:nonbinary}
The coding scheme described in Sec. \ref{s:acoding} can be adapted for non-binary distribution matching. The probability model $P_{\vec{C}}$ on codewords' symbols can be obtained from the base codebook via equations analogous to (\ref{eq:pb_s}). An MC codebook contains all codewords from multiple compositions, and the expression for $N(\vec{s})$ becomes a sum of multinomials, which is admissible to evaluate or store for shorter codewords or fewer compositions in the base codebook. However, for the base codebooks with large number of compositions or long codewords, the storage/computation requirements can become prohibitive for large alphabet. Unless some structure is added when choosing the compositions, the MCDM is better suited for binary distribution matching, e.g., it can be used for non-binary distribution matching in combination with the bit-level distribution matcher \cite{7929328}.

\section{Results} \label{s:results}
\subsection{Distribution Matching Performance} \label{ss:dm_performance}
We compare the CCDM with the $m$-out-of-$n$ codebook, MCDM with the $[m-1,m]$-out-of-$n$ codebook (2C-MCDM), and MCDM with the $[0,m]$-out-of-$n$ codebook (Opt-MCDM). We use the binary output alphabet $\mathcal{A}$ and the target probability distribution $P_A$ with $P_A(1)=0.422$. We vary the output length $n$. For each of the DMs we find the base codebook $\tilde{\mathcal{C}}$ which minimizes the KL divergence $\frac{1}{n}\mathbb{D}(\mathbb{U}_{\tilde{\mathcal{C}}}||P_A^n)$. This is equivalent to finding
\begin{equation}
m^* = \argmin_m \sum_{\vec{c} \in \tilde{\mathcal{C}}(m)} \frac{1}{|\tilde{\mathcal{C}}(m)|} \log \frac{\frac{1}{|\tilde{\mathcal{C}}(m)|}}{P_A^n(\vec{c})}, 
\end{equation}
where $\tilde{\mathcal{C}}(m)$ is the $m$-out-of-$n$, $[m-1,m]$-out-of-$n$, and $[0,m]$-out-of-$n$ codebook for the CCDM, 2C-MCDM, and Opt-MCDM, respectively. Next, we apply the coding scheme as in Sec. \ref{s:acoding} to build DMs using the aforementioned optimized base codebooks. The results are presented in Fig. \ref{f:results1}.  The divergence and the empirical output distribution $P_{\mathcal{C}}$ were computed by enumerating all codewords for $n \le 30$ and for higher $n$ via Monte-Carlo sampling. The number of samples was chosen so that the relative error of estimates lies within $3\%$ with probability not smaller than $90\%$. The matching rate is evaluated exactly. We also compute the parameters of the optimal DM from Lemma 1a implemented by SMDM \cite{shelldm}. 

Fig. \ref{f:results1} shows the superior performance of the MC codebooks in terms of the matching rate and KL divergence. Opt-MCDM stays very close to the optimal DM from Lemma 1a. In Fig. \ref{f:results1a} the optimal DM achieves higher rate than the entropy of the target distribution $\mathbb{H}(P_A)$ for $n \le 30$. This is because for $n \le 30$, the optimal base codebook is the $[0,n]$-out-of-$n$ codebook which contains all codewords of length $n$. This demonstrates that the lowest KL divergence can be achieved by performing no distribution matching at all, and hence the optimal codebook has $P_{\mathcal{C}} \neq P_A$. For large $n$, $P_{\mathcal{C}}(1) \to P_A(1)$ for all DMs, as this is a necessary condition for vanishing divergence for $n \to \infty$, as observed in Sec. \ref{s:dm_fund}. In Fig. \ref{f:results1c} for $n=110$, $P_{\mathcal{C}}$ coincide for all DMs. However, in Fig. \ref{f:results1b} the Opt-MCDM achieves the lowest KL divergence thanks to the largest codebook, confirming the observations from Sec. \ref{s:dm_fund}.

\begin{figure}
	\includegraphics[]{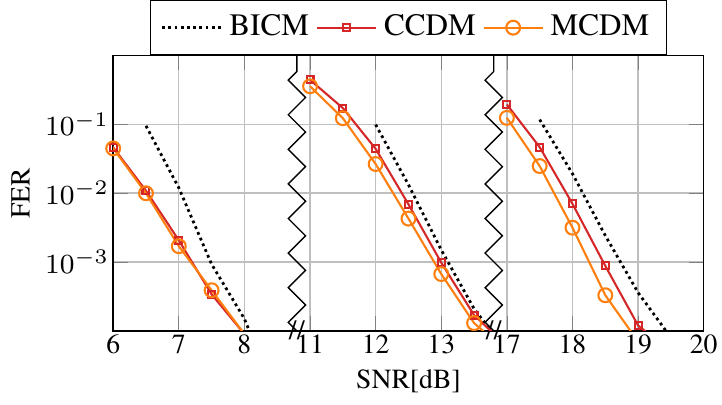}
	\vspace{-4mm}
	\caption{BICM, and PAS in the bit-level setup \cite{7929328} with CCDM, and MCDM, for rates $1.67$, $3.33$, $5.0$ b/CU. The channel is an additive white Gaussian noise channel.}
	\label{f:results2}
	\vspace{-6mm}
\end{figure}

\subsection{PAS Framework}
In practice, applying the Opt-MCDM instead of the CCDM for $n=110$ in a PAS communication system would mean ca. $3\%$ increase in the transmission rate (see Fig. \ref{f:results1a}).
Alternatively, assume we target the KL divergence $10^{-2}$. Instead of using one CCDM with $n=500$, we can use $5$ parallel Opt-MCDMs with $n=100$ to increase the throughput, as shown in Fig. \ref{f:results1b}.

Motivated by this considerations we apply the Opt-MCDM to a PAS system in a bit-level setup as in \cite{7929328}. We follow the steps exactly as in \cite{7929328} but instead of using the CCDM as a building block, we employ the Opt-MCDM for each bit-level. For each bit-level's target probability we find the optimal codebook as in Sec. \ref{ss:dm_performance}. We compare the results with \cite{7929328} employing the CCDMs, and the bit-interleaved coded-modulation (BICM) scheme without shaping \cite{669123}. For fair comparison, all schemes use the same WiMAX LDPC B-code of rate $5/6$ and codeword length of $576$ bits. LDPC decoder performs $50$ iterations. BICM operates with $3$ constellations: $4$-QAM, $16$-QAM, and $64$-QAM, which corresponds to transmission rates of  $1.67$, $3.33$, $5.0$ bits per channel use (b/CU), respectively. Shaping schemes use the $256$-QAM constellation and match the BICM transmission rates by applying appropriate transmit signal distributions. Frame error rate (FER) versus signal-to-noise ratio (SNR) curves are presented in Fig. \ref{f:results2}. By employing the MCDM instead of the CCDM, we gain  $0.02$, $0.11$, and $0.23$ dB, at FER=$10^{-3}$ for rates $1.67$, $3.33$, $5.0$ b/CU, respectively.

%

\vspace{-1mm}
\section{Conclusions} \label{s:conclusions}
In this work, we presented arithmetic coding based distribution matcher which uses multi-composition codes. The multi-composition distribution matcher generalizes the state of the art constant-composition distribution matcher, and is able to achieve higher matching rate and lower KL divergence.

\vspace{-1mm}
\section{Acknowledgments}
Part of this work has been performed in the framework of the Horizon 2020 project ONE5G (ICT-760809) receiving funds from the European Union. The authors would like to acknowledge the contributions of their colleagues in the project, although the views expressed in this contribution are those of the authors and do not necessarily represent the project.

The authors would like to thank Onurcan Iscan, Ronald B{\"{o}}hnke, Najeeb Ul Hassan from Huawei Technologies, for discussions and helpful comments for improving the manuscript.

\bibliographystyle{IEEEtran}
\vspace{-1mm}
\small{
\bibliography{references}
}

\end{document}